\documentclass{article}
\usepackage{fullpage}

\usepackage[utf8]{inputenc}
\usepackage{amsthm,amssymb}
\usepackage{amsmath}
\usepackage{thmtools,mathtools}
\usepackage{caption,subcaption}
\usepackage{epsfig,graphicx,graphics,color,tikz,tikz-cd}
\usepackage [linesnumbered, ruled,vlined]{algorithm2e}
\usetikzlibrary{calc,decorations.pathmorphing,shapes}
\usepackage[most]{tcolorbox}
\usepackage{array,makecell,multirow}
\setcellgapes{4pt}
\usepackage{bbm}
\usepackage{varwidth}
\usepackage{rotating}
\usepackage{booktabs}
\usepackage[mathcal,mathscr]{eucal}
\usepackage{varwidth}
\usepackage{enumerate,enumitem}
\usepackage[nocompress,sort]{cite}
\usepackage{xspace}
\usepackage{bm}
\usepackage{csquotes}
\usepackage{hyperref}
\hypersetup{
colorlinks=true,       
linkcolor=blue,        
citecolor=magenta,         
filecolor=magenta,     
urlcolor=cyan,         
linktoc=all
}

\theoremstyle{plain}
\newtheorem{thm}{Theorem}[section]

\newtheorem{cor}[thm]{Corollary}
\newtheorem{cl}[thm]{Claim}

\theoremstyle{definition}

\newtheorem{rem}[thm]{Remark}

\def\final{0}  
\def\iflong{\iffalse}
\ifnum\final=0  
\newcommand{\knote}[1]{{\color{red}[{\tiny \textbf{Kristóf:} \bf #1}]\marginpar{\color{red}*}}}
\newcommand{\hnote}[1]{{\color{blue}[{\tiny \textbf{Hanna:} \bf #1}]\marginpar{\color{blue}*}}}
\newcommand{\ynote}[1]{{\color{purple}[{\tiny \textbf{Yuhang:} \bf #1}]\marginpar{\color{purple}*}}}
\else 
\newcommand{\knote}[1]{}
\newcommand{\hnote}[1]{}
\newcommand{\ynote}[1]{}
\fi

\makeatletter
\renewcommand{\paragraph}{%
  \@startsection{paragraph}{4}%
  {\z@}{1.5ex \@plus 1ex \@minus .2ex}{-1em}%
  {\normalfont\normalsize\bfseries}%
}
\makeatother


\newcommand{\bR}{\mathbb{R}}

\newcommand{\bZ}{\mathbb{Z}}

\newcommand{\cE}{\mathcal{E}}

\newcommand{\cH}{\mathcal{H}}
\newcommand{\cI}{\mathcal{I}}

\newcommand{\cM}{{\bm{M}}}
\newcommand{\cS}{\mathcal{S}}

\DeclareMathOperator{\supp}{supp}


\newcommand{\maxpf}{\textsc{Max-PF}\xspace}
\newcommand{\maxwpf}{\textsc{Max-WPF}\xspace}
\newcommand{\maxgpf}{\textsc{Max-GPF}\xspace}
\newcommand{\maxwgpf}{\textsc{Max-WGPF}\xspace}

\newcommand{\maxdbmis}{\textsc{Max-DBMIS}\xspace}
\newcommand{\maxwdbmis}{\textsc{Max-WDBMIS}\xspace}

\definecolor[named]{Blue}{cmyk}{1,0.1,0,0.1}
\definecolor[named]{Yellow}{cmyk}{0,0.16,1,0}
\definecolor[named]{Orange}{cmyk}{0,0.42,1,0.01}
\definecolor[named]{Red}{cmyk}{0,0.90,0.86,0}
\definecolor[named]{LightBlue}{cmyk}{0.49,0.01,0,0}
\definecolor[named]{Green}{cmyk}{0.20,0,1,0.19}

\SetKwFor{Whileblock}{while}{do}{}
\SetKwBlock{Begin}{}{}
\SetVlineSkip{1.5pt}
\makeatletter
\newcommand{\linkdest}[1]{\Hy@raisedlink{\hypertarget{#1}{}}}
\makeatother

\newlength{\bibitemsep}\setlength{\bibitemsep}{.2\baselineskip plus .05\baselineskip minus .05\baselineskip}
\newlength{\bibparskip}\setlength{\bibparskip}{1pt}
\let\oldthebibliography\thebibliography
\renewcommand\thebibliography[1]{%
\oldthebibliography{#1}%
\setlength{\parskip}{\bibitemsep}%
\setlength{\itemsep}{\bibparskip}%
}

\usepackage{titling}
\thanksmarkseries{alph}

\title{Approximating Maximum Properly Colored Forests\\ via Degree Bounded Independent Sets}

\author{
Yuhang Bai\thanks{School of Mathematics and Statistics, Northwestern Polytechnical University and Xi'an-Budapest Joint Research Center for Combinatorics, Xi'an 710129,
Shaanxi, People's Republic of China. Email: \texttt{yhbai@mail.nwpu.edu.cn}.}
\and
Kristóf Bérczi\thanks{MTA-ELTE Matroid Optimization Research Group and HUN-REN–ELTE Egerváry Research Group, Department of Operations Research, ELTE Eötvös Loránd University, and HUN-REN Alfréd Rényi Institute of Mathematics, Budapest, Hungary. Email: \texttt{kristof.berczi@ttk.elte.hu}.}
\and
Johanna K. Siemelink\thanks{Department of Operations Research, ELTE Eötvös Loránd University, Budapest, Hungary. Email: \texttt{hanna.siemelink@gmail.com}.}
}
\date{}

\begin{document}

\maketitle

\begin{abstract}

In the \emph{Maximum-size Properly Colored Forest} problem, we are given an edge-colored undirected graph and the goal is to find a properly colored forest with as many edges as possible. We study this problem within a broader framework by introducing the \emph{Maximum-size Degree Bounded Matroid Independent Set} problem: given a matroid, a hypergraph on its ground set with maximum degree $\Delta$, and an upper bound $g(e)$ for each hyperedge $e$, the task is to find a maximum-size independent set that contains at most $g(e)$ elements from each hyperedge $e$. We present approximation algorithms for this problem whose guarantees depend only on $\Delta$. When applied to the Maximum-size Properly Colored Forest problem, this yields a $2/3$-approximation on multigraphs, improving the $5/9$ factor of Bai, Bérczi, Csáji, and Schwarcz [\textit{Eur. J. Comb. 132 (2026) 104269}].

\medskip

\noindent \textbf{Keywords:} Approximation algorithm, Properly colored forest, Degree bounded matroid independent set
\end{abstract}

\section{Introduction}
\label{sec:intro}

Throughout the paper, we consider loopless graphs that might contain parallel edges. A \emph{$k$-edge-colored} graph is a graph $G=(V,E)$ together with an edge-coloring $c\colon E\to [k]$. We call $G$ \emph{edge-colored} if it is $k$-edge-colored for some $k\in\mathbb{Z}_{\ge 1}$. A subgraph of $G$ is \emph{rainbow} if all its edges receive distinct colors, and \emph{properly colored} if any two of its adjacent edges have distinct colors.

Rainbow forests are exactly the common independent sets of the partition matroid defined by the color classes and the graphic matroid of the graph. Therefore a maximum-size rainbow forest can be found in polynomial time via Edmonds' matroid intersection algorithm~\cite{edmonds1970submodular}. By contrast, the properly colored case does not admit such a direct formulation, and is considerably less understood. Borozan, de La Vega, Manoussakis, Martinhon, Muthu, Pham, and Saad~\cite{borozan2019maximum} initiated the study of properly colored spanning trees in edge-colored graphs and investigated existence conditions. This problem generalizes the well-known \textit{Bounded-degree Spanning Tree} problem with uniform bounds, since the number of colors bounds the degree of each vertex. It also generalizes the properly colored \textit{Hamiltonian Path} problem when the number of colors is restricted to two. As both of these problems are NP-complete, finding a properly colored spanning tree is hard.

Bai, Bérczi, Csáji, and Schwarcz~\cite{bai2026approximating} introduced the \emph{Maximum-size Properly Colored Forest} problem (\maxpf), which asks for a properly colored forest of maximum size in an edge-colored graph, and analyzed its approximability on several graph classes, giving polynomial-time approximation algorithms and inapproximability bounds; we refer to the weighted variant of the problem as \emph{Maximum-weight Properly Colored Forest} problem (\maxwpf). In this paper, we extend their work by studying a more general problem formulation, which in turn yields improved approximation guarantees.

Let $\cM=(V,\mathcal{I})$ be a matroid on ground set $V$. Let $\mathcal{H}$ be a hypergraph on the same ground set with maximum degree $\Delta$, i.e., each element of $V$ belongs to at most $\Delta$ hyperedges, and let $g\colon E(\cH)\to\mathbb{Z}_{\ge 0}$ be upper bounds on the hyperedges. The \emph{Maximum-size Degree Bounded Matroid Independent Set} problem (\maxdbmis) asks to determine
\[
\max\{ |I| \colon I\in\mathcal{I}, |I\cap e|\le g(e) \text{ for all } e\in E(\cH) \}.
\]
We will also consider the weighted variant (\maxwdbmis), where each element $v\in V$ has a nonnegative weight $w(v)$, and the objective $|I|$ is replaced by $w(I)\coloneqq \sum_{v\in I} w(v)$. 
This problem is the independent set analogue of the \emph{Degree Bounded Matroid Basis} problem. For the latter, Király, Lau, and Singh~\cite{kiraly2012degree} showed that iterative relaxation yields a polynomial-time algorithm with additive violation at most $2\Delta-1$ for two-sided bounds, improving to $\Delta-1$ when only upper (respectively, only lower) bounds are present. 

\subsection{Our results}
\label{sec:our}

Bai, Bérczi, Csáji, and Schwarcz~\cite{bai2026approximating} showed that if a maximum-size set $U$ of edges can be covered by a union of matchings, where the $i$th matching consists of edges of color $i$, then there exists a maximum-size properly colored forest covering $U$. Any properly colored forest spanning $U$ has at least $|U|/2$ edges, giving a $1/2$ baseline. By performing local exchanges that replace many size-two components with larger properly colored components, they obtain a $5/9$-approximation algorithm for general graphs.

The above idea relies on the fact that a subgraph of an edge-colored graph is properly colored if and only if, for every color, the edges of that color form a matching. We now generalize this by allowing bounded degrees for every color. Given degree bounds $g_i\colon V\to\bZ_{\ge 0}$ for each $i\in[k]$, an edge set $F\subseteq E$ is \emph{$g$-properly colored} if, for each color $i\in[k]$ and each vertex $v\in V$, at most $g_i(v)$ edges of color $i$ in $F$ are incident to $v$. We consider an extension of \maxpf: finding a maximum-size $g$-properly colored forest (\maxgpf) or its weighted variant (\maxwgpf).

Observe that \maxgpf can be reduced to \maxdbmis as follows. For a $k$-edge-colored graph $G=(V,E)$, let $\cM=(E,\cI)$ be the graphic matroid on $E$. For each color $i\in[k]$, let $E_i\subseteq E$ denote the set of edges of color $i$, and for a vertex $v\in V$ let $\delta_{E_i}(v)$ be the edges of $E_i$ incident to $v$. We construct a hypergraph whose hyperedges are the sets $\delta_{E_i}(v)$ for all $v\in V$ and $i\in[k]$, each equipped with upper bound $g_i(v)$. A set $F\subseteq E$ is a $g$-properly colored forest in $G$ if and only if $F\in\cI$ and $|F\cap \delta_{E_i}(v)| \le g_i(v)$ holds for every $v\in V$ and every $i\in[k]$. In this reduction each edge belongs to at most two hyperedges, that is, $\Delta=2$.

We present approximation algorithms for the Degree Bounded Matroid Independent Set problem with guarantees depending on the maximum degree $\Delta$. For any fixed $\varepsilon>0$, we give polynomial-time algorithms achieving a $(2/(\Delta+1)-\varepsilon)$-approximation in the unweighted case (Theorem~\ref{thm:deltaMinusOne}) and a $1/(\Delta+\varepsilon)$-approximation in the weighted case with unit upper bounds, that is, $g(e)\le1$ for all $e\in E(\cH)$ (Theorem~\ref{thm:gmax1-delta}). As a corollary, we obtain a $2/3$-approximation for \maxgpf (Corollary~\ref{cor:maxpf-main}), improving the $5/9$ guarantee for \maxpf due to Bai, Bérczi, Csáji, and Schwarcz~\cite{bai2026approximating}. In the weighted case with unit upper bounds for \maxgpf, our guarantee becomes $1/(2+\varepsilon)$.
 
We also consider other properly colored structures. For the \textit{Maximum-size Properly Colored Forest with Bundles} problem, we devise a local-search algorithm that achieves a $1/3$-approximation (Theorem~\ref{thm:main}). In addition, we present a simple algorithm that attains a $3/4$-approximation for 2-edge-colored graphs and a $1/2$-approximation for 3-edge-colored graphs (Theorem~\ref{thm:SmallColors}). For directed graphs, we study $g$-properly colored and out-colored branchings: using the iterative refinement framework for three-matroid intersection~\cite{linhares2020iterative}, we obtain an LP-relative $1/2$-approximation for the \textit{Maximum-weight $g$-Properly Out-colored Branching} problem (Theorem~\ref{prop:outcoloredbranchings}), and, by viewing branchings as a special case of \maxwdbmis with $\Delta=3$, we derive a $1/2$-approximation in the unweighted case and a $(\ln 4)/5$-approximation in the weighted case (Theorem~\ref{thm:coloredbranchings}) for the \textit{Maximum-weight $g$-Properly Colored Branching} problem. Finally, for \textit{Maximum-weight $g$-Properly Colored $b$-matchings}, we show that an optimal solution can be found in strongly polynomial time via a reduction to the Maximum-weight Hierarchical $b$-matching problem studied by Madarasi~\cite{madarasi2021} (Theorem~\ref{thm:b-matchings}).

\subsection{Related work}
\label{sec:related}
Observe that the feasible sets for \maxdbmis form a $(\Delta+1)$-extendible system. Recall that an independence system $(V,\cI)$ is \emph{$k$-extendible} if, whenever $A\subseteq B\in\cI$ and $x\notin B$ with $A+x\in\cI$, there exists a set $Z\subseteq B\setminus A$ of size at most $k$ such that $(B+x)\setminus Z\in\cI$; informally, adding one element may require removing at most $k$ others. In our setting, $\cI$ consists of the sets that are independent in the matroid $\cM$ and respect all hyperedge upper bounds. When we try to add an element $x$, at most one element must be removed from each tight hyperedge containing $x$, and since every element lies in at most $\Delta$ hyperedges, this accounts for at most $\Delta$ deletions. If the resulting set violates matroid independence, we may need to remove one additional element to restore it. Thus at most $\Delta+1$ deletions are required, proving $(\Delta+1)$-extendibility. For $k$-extendible systems, the best known approximation guarantee is $1/k$~\cite{jenkyns1976efficacy}.

The notion of a $k$-exchange system, introduced by Feldman, Naor, Schwartz, and Ward~\cite{feldman2011improved}, strictly strengthens $k$-extendibility and admits better approximation guarantees through local search on squared weights (a technique originating with Berman~\cite{berman2000d}). This yields a $2/(k+1+\varepsilon)$ approximation for linear objectives and $2/(k+\varepsilon)$ for cardinality~\cite{feldman2011improved,berman2000d}. In addition, local search algorithms for $k$-set packing -- particularly those employing squared-weight improvements~\cite{berman2000d,berman2003optimizing,cygan2013improved,Neuwohner2021an,neuwohner2024limits,neuwohner2023passing,thiery2023improved} -- achieve a $3/(k+1)$ approximation in the unweighted case~\cite{cygan2013improved}. However, these analyses rely on structural properties such as strong base orderability, which do not hold in our setting. We emphasize that the feasible solutions of \maxdbmis do not form a $(\Delta+1)$-exchange system. Nonetheless, its packing constraints are uniform matroids (coming from the hyperedges), while only the underlying matroid is unrestricted, so the problem lies between $(\Delta+1)$-exchange systems and $(\Delta+1)$-extendible systems.

Another important problem that lies between $k$-exchange systems and $k$-extendible systems is the $k$-matroid intersection problem, a central topic in combinatorial optimization. When $k=1$, the problem reduces to finding an independent set in a single matroid and can be solved in polynomial time by the greedy algorithm. For $k=2$, Edmonds' celebrated algorithm~\cite{edmonds1970submodular} computes a maximum-size common independent set in polynomial time. For $k\ge 3$, however, no polynomial-time exact algorithm is known. The best known approximation ratio in the unweighted case is $2/k$, due to Lee, Sviridenko, and Vondrák~\cite{lee2010matroid}. Recently, Singer and Thiery~\cite{singer2025better} obtained a $\ln 4/(k+1)$-approximation for the weighted case, improving the longstanding $(k-1)$-approximation of Lee, Sviridenko, and Vondrák~\cite{lee2010submodular}.

\subsection{Notation and basic definitions}
\label{sec:notation}

\paragraph{Basic notation.}

For an integer $\ell$, we denote the \emph{set of integers greater or equal to $\ell$} by $\bZ_{\geq \ell}$. For a positive integer $k$, we use $[k]\coloneqq\{1,\dots,k\}$. Given a ground set $E$, the \emph{difference} of $X,Y\subseteq E$ is denoted by $X\setminus Y$. If $Y$ consists of a single element $y$, then $X\setminus \{y\}$ and $X\cup \{y\}$ are abbreviated as $X-y$ and $X+y$, respectively.

\paragraph{Graphs and hypergraphs.}

We consider loopless undirected graphs that may contain parallel edges. Let $G=(V,E)$ be a graph, $F\subseteq E$ a subset of edges, and $X\subseteq V$ a subset of vertices. We denote by $\delta_F(X)$ the set of edges in $F$ \textit{having exactly one endpoint} in $X$, while $F[X]$ denotes the set of edges in $F$ \textit{induced} by $X$. Given a function $b\colon V\to\bZ_{\ge 0}$, a \emph{$b$-matching} is a subset of edges $M\subseteq E$ such that, for every vertex $v\in V$, the number of edges of $M$ incident to $v$ is at most $b(v)$, that is, $|\delta_M(v)|\le b(v)$. This extends the notion of a matching, since $b\equiv 1$ yields an ordinary matching.

If the graph is directed, then the sets of edges \emph{entering} and the \emph{leaving} $X$ in $F$ are denoted by $\delta^{in}_F(X)$ and $\delta^{out}_F(X)$, respectively. An \emph{arborescence} is a directed tree with a root $r\in V$ that has in-degree zero, while every other vertex has in-degree one. A \emph{branching} is a union of vertex-disjoint arborescences.

Let $c\colon E\to[k]$ be an edge-coloring of $G$ using $k$ colors. The function $c$ is extended to subsets of edges by letting, for $F\subseteq E$, $c(F)$ denote the set of colors appearing on the edges of $F$. For an edge-colored graph $G=(V,E)$, we use $E_i=\{e\in E \colon c(e)=i\}$ to denote the set of edges of color $i$. Without loss of generality, we assume throughout that each $E_i$ contains no parallel edges.

A \emph{hypergraph} is a pair $\mathcal H=(V,E)$ where $E\subseteq 2^V$. For a set $F\subseteq E$ of hyperedges and $X\subseteq V$, we denote by $\delta_F(X)$ the set of hyperedges in $F$ that intersect both $X$ and $V\setminus X$.

\paragraph{Matroids.}

For basic definitions on matroids and on matroid optimization, we refer the reader to~\cite{oxley2011matroid,frank2011connections}. A \emph{matroid} $\cM=(E,\cI)$ is defined by its \emph{ground set} $E$ and its \emph{family of independent sets} $\cI\subseteq 2^E$ that satisfies the \emph{independence axioms}: (I1) $\emptyset\in\cI$, (I2) $X\subseteq Y,\ Y\in\cI\Rightarrow X\in\cI$, and (I3) $X,Y\in\cI,\ |X|<|Y|\Rightarrow\exists e\in Y\setminus X\ s.t.\ X+e\in\cI$. Members of $\cI$ are called \emph{independent}, while sets not in $\cI$ are called \emph{dependent}.

For a set $E$, the matroid in which every subset of $E$ is independent is called a \emph{free matroid}. For disjoint sets $E_1$ and $E_2$, the \emph{direct sum} $\cM_1\oplus \cM_2$ of matroids $\cM_1=(E_1,\cI_1)$ and $\cM_2=(E_2,\cI_2)$ is a matroid $\cM=(E_1\cup E_2,\mathcal{I})$ whose independent sets are the disjoint unions of an independent set of $\cM_1$ and an independent set of $\cM_2$. The intersection of $\cM_1$ and $\cM_2$ is denoted by $\cM_1\cap \cM_2$; note that this is not a matroid in general.

The {\it uniform matroid} $\cM=(E,\cI)$ of rank $r$ is defined as $\cI=\{I\subseteq E\colon |I|\leq r\}$. For a graph $G=(V,E)$, the \emph{graphic matroid} $M=(E,\cI)$ of $G$ is defined on the edge set by considering a subset $F\subseteq E$ to be independent if it is a forest, that is, $\cI=\{F\subseteq E\colon F\ \text{does not contain a cycle}\}$. 

\subsection{Paper organization}

The paper is organized as follows. Section~\ref{sec:dbmis} develops approximation algorithms for \maxwdbmis with guarantees depending on the maximum degree $\Delta$ of the hypergraph $\cH$.\footnote{In matroid algorithms, it is usually assumed that the matroid is given by an \emph{independence oracle} and the running time is measured by the number of oracle calls and other conventional elementary steps.} The remaining sections focus on approximation algorithms for other properly colored structures: in Section~\ref{sec:maxpf+}, we augment properly colored forests with bundles, Section~\ref{sec:arborescences} investigates $g$-properly colored arborescences in directed graphs, and Section~\ref{sec:bmatchings} presents algorithms for $g$-properly colored $b$-matchings. We conclude the paper by a list of open question in Section~\ref{sec:conc}.

\section{Degree bounded matroid independent set}\label{sec:dbmis}

This section is dedicated to proving our main results: Theorem~\ref{thm:deltaMinusOne} for the unweighted case and Theorem~\ref{thm:gmax1-delta} for the weighted case with unit upper bounds. Our proof proceeds via a reduction to the \emph{Matroid $k$-parity} problem, defined as follows. We are given a matroid $\cM=(V,\cI)$ and a collection $\cE$ of pairwise disjoint $k$-element subsets of $V$, together with a linear weight function $w\colon \cE\to\mathbb R_{\ge 0}$. The task is to find a maximum-weight subcollection $\cS\subseteq \cE$ such that $\bigcup_{e\in\cS} e\in\cI$. In the unweighted case we simply set $w(e)=1$ for all $e\in\cE$, so the objective becomes maximizing $|\cS|$.

\begin{thm}\label{thm:deltaMinusOne}
For any fixed $\varepsilon>0$, there exists a polynomial-time $(\tfrac{2}{\Delta+1}-\varepsilon)$-approximation algorithm for \maxdbmis, and a polynomial-time $\tfrac{\ln 4}{\Delta+2}$-approximation algorithm for \maxwdbmis, where $\Delta$ denotes the maximum degree of the input hypergraph.
\end{thm}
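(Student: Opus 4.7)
The plan is to reduce \maxdbmis (and its weighted variant \maxwdbmis) to an instance of the matroid $k$-parity problem with $k = \Delta + 1$, and then invoke existing approximation algorithms for matroid $k$-parity. For the reduction, for each $v \in V$ let $d_v \le \Delta$ denote the number of hyperedges containing $v$, and create $\Delta + 1$ pairwise disjoint copies of $v$: a primary copy $v^0$, a copy $v^e$ for each hyperedge $e \ni v$, and $\Delta - d_v$ dummy copies. Let $V^*$ be the union of all copies, and define a matroid $\cM^*$ on $V^*$ as the direct sum of (i) the original matroid $\cM$ transported to the primary copies via $v \mapsto v^0$, (ii) a uniform matroid of rank $g(e)$ on $\{v^e : v \in e\}$ for each hyperedge $e \in E(\cH)$, and (iii) the free matroid on all dummy copies. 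The bundles are $B_v = \{v^0\} \cup \{v^e : e \ni v\} \cup (\text{dummies of } v)$; these are pairwise disjoint and each has size exactly $\Delta + 1$. In the weighted case we set $w(B_v) := w(v)$.

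By the direct-sum definition, a subcollection of bundles $\cS$ satisfies $\bigcup_{B_v \in \cS} B_v \in \cI(\cM^*)$ if and only if the associated subset $\{v : B_v \in \cS\} \subseteq V$ is independent in $\cM$ and contains at most $g(e)$ elements from every hyperedge $e$; cardinalities and (in the weighted case) weights also coincide. Thus any $\alpha$-approximation for matroid $(\Delta + 1)$-parity automatically yields an $\alpha$-approximation for \maxdbmis (or \maxwdbmis). Setting $k = \Delta + 1$, the two claimed guarantees then follow from: (a) a local-search $(2/k - \varepsilon)$-approximation for matroid $k$-parity based on squared-weight improvements in the spirit of Lee, Sviridenko, and Vondrák~\cite{lee2010matroid}, yielding the unweighted bound $2/(\Delta + 1) - \varepsilon$; and (b) a $\ln 4 / (k + 1)$-approximation analogous to the Singer and Thiery iterative-refinement algorithm for weighted $k$-matroid intersection~\cite{singer2025better}, yielding the weighted bound $\ln 4 / (\Delta + 2)$.

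The hard part will be (b), the weighted case: the Singer and Thiery $\ln 4/(k+1)$-approximation is originally stated for $k$-matroid intersection rather than $k$-parity, so we must either reformulate our parity instance as a proper $k$-matroid intersection problem (which is not entirely straightforward, as overlapping hyperedges prevent grouping the uniform matroids into $\Delta$ partition matroids on $V$ by a naive coloring argument) or adapt the Singer and Thiery analysis directly to the parity setting by exploiting the very restrictive structure of $\cM^*$, namely that the LP relaxation decouples cleanly along the summands of the direct sum and that each bundle picks exactly one element per summand. The unweighted bound (a) is more immediate: the local-search algorithm relies only on matroid exchange axioms and runs in polynomial time with oracle access to $\cM^*$, which is provided from an oracle for $\cM$ via the direct-sum decomposition.
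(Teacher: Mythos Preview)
Your reduction to matroid $(\Delta+1)$-parity is exactly the paper's proof: same copies (a primary copy, one copy per incident hyperedge, and dummies to pad each bundle to size $\Delta+1$), same direct-sum matroid (the original matroid transported to the primary copies, a rank-$g(e)$ uniform matroid on each hyperedge's copies, and the free matroid on the dummies), and the same bijection between feasible \maxdbmis solutions and feasible parity subcollections, preserving size and weight.

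The only difference is that the paper does \emph{not} carry out the extra work you anticipate in (b). It simply cites \cite{lee2010matroid} and \cite{singer2025better} as black-box approximation results for matroid $k$-parity with $k=\Delta+1$ and stops there; in particular, it treats the $\ln 4/(k+1)$ bound as directly applicable to the parity problem rather than only to $k$-matroid intersection. So the ``hard part'' you flag is not addressed in the paper either --- your proposal is already as complete as the paper's own proof and follows the same route.
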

\begin{proof}
We begin with an arbitrary instance $(\cM=(V,\cI),\cH,g,w)$ of \maxwdbmis. Our goal is to reduce this instance, in polynomial time, to an instance of Matroid $(\Delta+1)$-parity and then apply an existing approximation algorithm for that problem.

For each $v\in V$, create $|\delta_\cH(v)|+1$ labeled copies of $v$, namely $v^{M}$ and $v^{e}$ for each $e\in\delta_{\cH}(v)$. Additionally, introduce a set $D_v$ of dummy elements with $|D_v|=\Delta-|\delta_\cH(v)|$ such that $D_u\cap D_v=\emptyset$ for $u\ne v$. We denote $V_M \coloneqq  \{v^M \colon  v\in V\}$, $V_e \coloneqq  \{v^{e} \colon v\in e\}$ for each $e \in E(\cH)$, and set
$$
V' \coloneqq  V_M \cup \bigl(\bigcup_{e\in E(\cH)} V_e\bigr) \cup \bigl(\bigcup_{v\in V} D_v\bigr).
$$
Now we define the collection $\cE' \coloneqq  \{e'_v \colon v\in V\}$ of disjoint $(\Delta+1)$-elements subsets of $V'$, where $e'_v \coloneqq  \{v^{M}\} \cup \{v^{e} \colon e\in\delta_{\cH}(v)\} \cup D_v$, so that $|e'_v|=\Delta+1$ for every $v\in V$. We also assign a weight $w'(e'_v)=w(v)$ for each $v\in V$.

Next we define matroids on $V'$ as follows. For
$
V_M \coloneqq  \{v^{M} \colon v\in V\},
$
we copy the original matroid: let $\phi_0\colon V_M\to V$ be given by $\phi_0(v^{M})=v$, and define the matroid $\cM_0$ on $V_M$ by $I\in\cI(\cM_0)$ if and only if $\phi_0(I)\in\cI(\cM)$,  so that $\cM_0\simeq\cM$. For each $e\in E(\cH)$, define a uniform matroid $\cM_e$ on $V_e$ having rank $g(e)$, thus enforcing the bounds. For each $v\in V$, let $\cM^{\mathrm{free}}_{v}$ be the free matroid on $D_v$. 
Finally, we take the direct sum 
$$
\cM_{\Sigma}
\coloneqq  \cM_0 \oplus \bigl(\bigoplus_{e\in E(\cH)}\cM_e\bigr) \oplus \bigl(\bigoplus_{v\in V}\cM^{\mathrm{free}}_{v}\bigr)
$$ on ground set $V'$. This way, we get a matroid $(\Delta+1)$-parity instance $(\cM_{\Sigma}=(V',\cI'),\cE',w')$.

Take first an arbitrary feasible solution $I\subseteq V$ for the original \maxwdbmis instance, that is, $I\in \cI(\cM)$ and $|I\cap e|\le g(e)$ for each $e\in E(\cH)$. Define $\mathcal S(I)\coloneqq \{e'_v\in\cE' \colon v\in I\}$. Then, $\{v^{M} \colon v\in I\}\in \cI(\cM_0)$, $\{v^{e} \colon v\in I\cap e\}\in \cI(\cM_e)$ for each $e\in E(\cH)$, and $\cM^{\mathrm{free}}_{v}$ is free for each $v\in V$. Consequently, $\bigl(\bigcup_{e'\in\mathcal S(I)}e'\bigr)\in\cI(\cM_{\sum})$. Moreover, by construction, we have $w'(\mathcal S(I)) = w(I)$.

Now let $\mathcal{S}\subseteq \cE'$ be a feasible solution of the $(\Delta+1)$-parity instance defined above. Define $I(\mathcal{S})\coloneqq\{v\colon e'_v\in\mathcal{S}\}$. Then, it is not difficult to check that $I(\mathcal{S})$ is a feasible solution for the original \maxwdbmis instance. Moreover, by construction, we have $w'(\mathcal{S})=w(I(\mathcal{S}))$.

By the above reduction to matroid $(\Delta+1)$-parity, known approximation guarantees for matroid $k$-parity with $k=\Delta+1$ immediately yield a $(2/(\Delta+1) - \varepsilon)$-approximation for \maxdbmis~\cite{lee2010matroid} and a $(\ln 4)/(\Delta+2)$-approximation for \maxwdbmis~\cite{singer2025better}. The reduction takes polynomial time and relies only on independence oracles for $\cM$; the uniform and free matroids in the construction have trivial independence oracles. 
\end{proof}

By the reduction given in Section~\ref{sec:our}, we obtain the following corollary.

\begin{cor}\label{cor:maxpf-main}
    For any fixed $\varepsilon>0$, there is a polynomial-time $(2/3-\varepsilon)$-approximation algorithm for \maxgpf.
\end{cor}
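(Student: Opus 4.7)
The plan is simply to instantiate the reduction sketched in Section~\ref{sec:our} and plug the resulting \maxdbmis instance into Theorem~\ref{thm:deltaMinusOne}. Concretely, given a $k$-edge-colored graph $G=(V,E)$ with color-degree bounds $g_i\colon V\to\bZ_{\ge 0}$, I would take $\cM=(E,\cI)$ to be the graphic matroid of $G$ and build a hypergraph $\cH$ on the ground set $E$ whose hyperedges are $\delta_{E_i}(v)$ for each $v\in V$ and each color $i\in[k]$, with associated upper bound $g_i(v)$. The correspondence stated in Section~\ref{sec:our} shows that the feasible solutions of this \maxdbmis instance are in bijection with the $g$-properly colored forests of $G$, and the bijection is the identity on $E$, so it preserves cardinality (and, in the weighted variant, weights) automatically.

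The key numerical step is the degree bound on $\cH$. An edge $uv$ of color $i$ has a unique color, so it lies in exactly the two hyperedges $\delta_{E_i}(u)$ and $\delta_{E_i}(v)$; hence $\Delta=2$. Substituting $\Delta=2$ into Theorem~\ref{thm:deltaMinusOne} yields the claimed $(2/3-\varepsilon)$ approximation guarantee. Since $\cH$ has at most $k\cdot|V|$ hyperedges and the graphic matroid admits a trivial independence oracle, the whole reduction runs in polynomial time, so the approximation algorithm for \maxdbmis transfers directly. There is no real obstacle here: the content of the corollary is entirely absorbed by the reduction already laid out in the introduction and by Theorem~\ref{thm:deltaMinusOne}, and the only thing to verify is the degree count $\Delta=2$, which follows immediately from the fact that each edge has exactly one color and exactly two endpoints.
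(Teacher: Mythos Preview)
Your proposal is correct and matches the paper's own argument exactly: the paper simply invokes the reduction from Section~\ref{sec:our} (graphic matroid plus the hypergraph of sets $\delta_{E_i}(v)$, giving $\Delta=2$) and applies Theorem~\ref{thm:deltaMinusOne}. Your write-up spells out the details the paper leaves implicit, but the content is identical.
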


\begin{rem}
The natural LP relaxation for \maxpf has integrality gap $2/3$ already on the simplest instance: a monochromatic triangle with unit weights. The integral optimum is $1$, since at most one edge can be chosen, whereas the fractional solution $x_e=1/2$ for all three edges is feasible and has value $2/3$.
\end{rem}

When every hyperedge has upper bound $g(e)\le 1$, \maxwdbmis admits an approximation guarantee arbitrarily close to $1/\Delta$.

\begin{thm}\label{thm:gmax1-delta}
For any fixed $\varepsilon>0$, there exists a polynomial-time $1/(\Delta+\varepsilon)$-approximation algorithm for \maxwdbmis when $g(e)\le 1$ for all $e\in E(\cH)$, where $\Delta$ denotes the maximum degree of the input hypergraph.
\end{thm}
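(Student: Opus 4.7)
The plan is to refine the reduction from the proof of Theorem~\ref{thm:deltaMinusOne} so that, under the unit upper bound hypothesis, the resulting instance can be viewed as weighted $(\Delta+1)$-matroid intersection, and then to invoke the non-oblivious local search algorithm of Lee, Sviridenko, and Vondr\'ak~\cite{lee2010submodular}, whose approximation ratio for weighted $k$-matroid intersection yields exactly $1/(\Delta+\varepsilon)$ when $k=\Delta+1$.

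The starting point is that under $g(e)\le 1$, every uniform matroid $\cM_e$ appearing in $\cM_\Sigma$ has rank one, so $\bigoplus_{e\in E(\cH)}\cM_e$ is a partition matroid, and each $(\Delta+1)$-element parity group $e'_v$ contains at most one element of every block $V_e$ together with exactly one element of $\cM_0$. I would distribute the $\Delta$ non-$\cM_0$-elements of each $e'_v$ (the $v^e$'s and the dummies in $D_v$) across $\Delta$ ``slots'' by fixing, for every $v\in V$, an ordering of $\delta_\cH(v)$ padded with dummies. This exhibits $\cM_\Sigma$ as a direct sum $\cM_0\oplus\cN_1\oplus\cdots\oplus\cN_\Delta$ in which every parity group contributes exactly one element to each summand, turning matroid $(\Delta+1)$-parity into $(\Delta+1)$-matroid intersection on the ground set obtained by identifying parity groups with elements of $V$. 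The $\Delta+1$ projected matroids on $V$ are $\cM$ itself together with $\Delta$ partition matroids determined by the slot assignment, and applying the LSV algorithm to this intersection would yield a common independent set of weight at least $1/(\Delta+\varepsilon)$ times the optimum.

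The hard part will be arranging the slot assignment so that the matroid intersection encodes the original problem exactly rather than merely relaxing it. A naive per-vertex ordering only guarantees the weaker constraint $|I\cap e|\le\Delta$, since distinct members of $e$ may send $e$ to different slots; exactness requires a \emph{coherent} assignment in which every $v\in e$ places $e$ in the same slot, equivalent to a matching coloring of $\cH$ with $\Delta$ colors. Since $\cH$ need not admit such a coloring, I expect the cleanest remedy to be a further expansion of the ground set: introducing a separate copy of $v$ per incidence with a hyperedge, each governed by its own rank-one uniform matroid, and determining the slot of a hyperedge globally rather than per-member, so that alignment becomes automatic and the reduction is exact. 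The expansion is linear in the input size, so the overall algorithm remains polynomial, and its dependence on $1/\varepsilon$ inherits from the LSV local search.
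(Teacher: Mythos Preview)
Your reduction hits a real obstruction that the proposed remedy does not resolve. For the matroid $(\Delta+1)$-parity instance of Theorem~\ref{thm:deltaMinusOne} to become $(\Delta+1)$-matroid intersection, each parity group $e'_v$ must contribute exactly one element to each summand $\cM_0,\cN_1,\dots,\cN_\Delta$, and for every hyperedge $e$ all copies $\{v^e:v\in e\}$ must land in a \emph{single} summand $\cN_{j(e)}$ (otherwise the rank-$1$ constraint on $V_e$ is lost). Together these two requirements force $e\mapsto j(e)$ to be a proper $\Delta$-edge-coloring of $\cH$, which you correctly note need not exist. Your ``further expansion'' is not a further expansion at all: the per-incidence copies $v^e$ are already present in the construction of Theorem~\ref{thm:deltaMinusOne}, and ``determining the slot of a hyperedge globally rather than per-member'' is precisely the statement that the slot depends only on $e$ --- i.e., you are right back to demanding a $\Delta$-edge-coloring. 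No linear blow-up of the ground set sidesteps this, because the rank-$1$ block $V_e$ cannot be split across summands without either relaxing the constraint $|I\cap e|\le 1$ or destroying the one-element-per-slot property that turns parity into intersection. Concretely, take $\cH$ to be a triangle on $\{a,b,c\}$ with all $g\equiv 1$: here $\Delta=2$, the three edges require three distinct slots, and your scheme cannot produce a faithful $3$-matroid intersection.

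The paper avoids this by \emph{not} reducing the whole instance to $(\Delta+1)$-matroid intersection. It runs $p$-exchange local search directly on \maxwdbmis and invokes the intersection structure only in the \emph{analysis}, restricted to $S\cup S^\star$ where $S$ is the current solution and $S^\star$ is optimal. Because both are feasible and $g\le 1$, each hyperedge meets $S\cup S^\star$ in at most two elements, one from each side; the resulting bipartite conflict graph has maximum degree $\Delta$, and K\H{o}nig's theorem decomposes its edges into $\Delta$ matchings, yielding $\Delta$ partition matroids on $S\cup S^\star$ whose intersection with $\cM|_{S\cup S^\star}$ is exactly the set of feasible subsets of $S\cup S^\star$. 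The Lee--Sviridenko--Vondr\'ak exchange lemma for $(\Delta+1)$ matroids then shows that either $w(S)\ge w(S^\star)/(\Delta+\varepsilon)$ or an improving $p$-exchange exists inside $S\cup S^\star$, hence in the original instance. The $\Delta$-coloring that is unattainable globally becomes available precisely because one restricts to the union of two feasible solutions; this is the missing idea in your plan.
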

\begin{proof}
Consider an instance $(\cM=(V,\cI),\cH,g,w)$ of \maxwdbmis. Let $S\subseteq V$ be a feasible solution, and let $S^\star$ be an optimal solution. 

We construct a bipartite \emph{conflict graph} $G=(A,B)$ as follows. Let $A=\{\alpha_v\colon v\in S\}$ and $B=\{\beta_u\colon u\in S^\star\}$. We add an edge $\{\alpha_v,\beta_u\}$ to $G$ if and only if there exists $e\in E(\cH)$ with $\{u,v\}\subseteq e$. Since $g(e)\le 1$, for each fixed $v\in S$ and each hyperedge $e$ containing $v$, there is at most one $u\in S^\star\cap e$. As each element lies in at most $\Delta$ hyperedges, every vertex of $G$ has degree at most $\Delta$. By Kőnig's edge coloring theorem~\cite{konig1916graphen} for bipartite graphs, we can partition $E(G)$ into $\Delta$ disjoint matchings,
\[
E(G)=N_1\ \dot\cup\ \cdots\ \dot\cup\ N_\Delta.
\]
Define $\Delta$ partition matroids and one restriction of $\cM$ on the common ground set $S\cup S^\star$ as follows. For each $i\in[\Delta]$, let $\cM_i$ be the partition matroid whose parts are the pairs $\{v,u\}$ corresponding to the conflict edges $\alpha_v\beta_u\in N_i$, each part having rank $1$. Thus any $\cM_i$-independent set contains at most one endpoint of every conflict edge in $N_i$. Let $\cM_0\coloneqq \cM|_{S\cup S^\star}$ be the restriction of $\cM$ to $S\cup S^\star$.

By construction, a set $X\subseteq S\cup S^\star$ is independent in the intersection $\cM_0 \cap \bigl(\bigcap_{i=1}^{\Delta}\cM_i\bigr)$ if and only if $X\in\cI$ and for every matching $N_i$, the set $X$ avoids selecting both endpoints of any conflict edge in $N_i$. Consequently, the feasible subsets of $S\cup S^\star$ for \maxwdbmis are precisely the independent sets of $\cM_0 \cap \bigl(\bigcap_{i=1}^{\Delta}\cM_i\bigr)$.

Let $k\coloneqq \Delta+1$ and $p\coloneqq \lceil 1/\varepsilon\rceil$. The $p$-exchange local search analysis of~\cite{lee2010submodular} for maximizing a linear objective function over the intersection of $k$ matroids implies that any $p$-local optimum $X$ for $\cM_0\cap\bigl(\bigcap_{i=1}^{\Delta}\cM_i\bigr)$ satisfies
\[
w(X)\ \ge\ \frac{1}{\,k-1+1/p\,}\,w(X^\star)\ =\ \frac{1}{\,\Delta+1/p\,}\,w(X^\star)\ =\ \frac{1}{\,\Delta+\varepsilon\,}\,w(X^\star),
\]
where $X^\star$ is an optimal solution in the intersection. Applying this exchange argument to $S\cup S^\star$ implies that, unless $w(S)\ge \frac{1}{\Delta+\varepsilon}w(S^\star)$, there exists a $p$-exchange for $S$ in the underlying \maxwdbmis instance. Therefore the standard $p$-exchange local search on \maxwdbmis terminates in polynomial time for fixed $p$ and returns a solution $S$ with $w(S)\ \ge\ \frac{1}{\Delta+\varepsilon}\,w(S^\star)$.
\end{proof}

\begin{rem}
For unit upper bounds, we obtained a $1/(\Delta+\varepsilon)$-approximation for $\maxwdbmis$ under an arbitrary matroid constraint. In the proof of Theorem~\ref{thm:deltaMinusOne}, we reduced $\maxwdbmis$ to weighted matroid $k$-parity. For the instances produced by this reduction, we have $k\coloneqq \Delta+1$, and thus the above guarantee yields a $1/(k-1+\varepsilon)$-approximation for this special class of weighted matroid $k$-parity instances. Compared with the general $\ln 4/(k+1)$ bound, observe that $1/(k-1+\varepsilon)>\ln 4/(k+1)$ whenever $k\le6$ (equivalently, $\Delta\le5$). Hence, for small $k$, our unit upper  bound strictly improves the general guarantee on this subclass. Moreover, to the best of our knowledge, even for general weighted matroid $k$-parity no $1/(k-1)$-approximation is currently known; our result therefore achieves this benchmark on the above subclass when $k\le6$.
\end{rem}

\begin{rem}
The $O(1/\Delta)$ guarantees are tight up to constant factors. The well-known \emph{$k$-Dimensional Matching} problem is a special case of \maxdbmis obtained by taking the free matroid, assigning upper bound $1$ to every vertex on its incident hyperedges, and noting that each chosen hyperedge participates in exactly $k$ such constraints, that is, $\Delta=k$. Lee, Svensson, and Thiery~\cite{lee2025asymptotically} proved that, unless $\mathrm{NP}\subseteq\mathrm{BPP}$, for any $\varepsilon>0$ and all sufficiently large $k$, no polynomial-time algorithm can approximate $k$-Dimensional Matching within a factor better than $(12+\varepsilon)/k$.
\end{rem}

\section{Other structures}
\label{sec:properly}

After establishing our results for forests, a natural next step is to consider other properly colored structures within the same framework. We examine three directions in this section. Section~\ref{sec:maxpf+} augments properly colored forests with bundles. Section~\ref{sec:arborescences} investigates $g$-properly colored arborescences in directed graphs, integrating color and branching constraints. Section~\ref{sec:bmatchings} develops $g$-properly colored $b$-matchings. 

\subsection{Properly colored forests with bundles}
\label{sec:maxpf+}

Let $G=(V,E)$ be a graph. For distinct $u, v\in V$, denote $E_{uv} \coloneqq  \{e\in E \colon e \text{ has endpoints $u$ and $v$}\}$. A \emph{bundle} between $u$ and $v$ is any nonempty subset of $E_{uv}$. Given a set $F\subseteq E$, we denote by $\supp(F)$ the \emph{support graph} of $F$, obtained by substituting each bundle of parallel edges by a single edge. Then, $F$ is a \emph{forest with bundles} if $\supp(F)$ is a forest, or equivalently, $F$ contains no cycle of length at least three. The notion is inspired by \maxwpf, where a weighted edge can be viewed as representing several parallel edges between its endpoints. The analogy is not literal, since it is not clear how to define the colors of these edges; still, this idea motivates the model. In the \emph{Maximum-size Properly Colored Forest with Bundles} problem, the goal is to find a maximum-size properly colored forest with bundles. The resulting problem, however, is genuinely different from \maxwpf, and we know of no reduction in either direction. In particular, forests with bundles do not form the independent sets of a matroid, hence the $1/2$-approximation barrier does not carry over. In fact, the family of feasible solutions is not even $3$-extendable. 

Nevertheless, we can still provide a $1/3$-approximation algorithm. Its high-level idea is as follows. The algorithm proceeds via local improvements. Starting from an arbitrary properly colored forest with bundles, it repeatedly looks for ways to increase its size without introducing a cycle in the support graph or violating the coloring constraints. It first performs greedy augmentations, adding any edge whose inclusion keeps the set feasible. Then, for each pair of vertices $u$ and $v$ connected by at least one unused edge, the algorithm considers a local exchange: if $u$ and $v$ are already connected by a path in the support graph, adding an edge between them would create a cycle. In this case, it inspects the edges of this path and, for each such edge, considers removing its entire bundle and replacing it with a larger bundle between $u$ and $v$ whose colors do not conflict with those of the remaining edges. Whenever such an improving exchange is possible, the solution is updated and the greedy augmentation step is resumed. The procedure terminates when no further improvement is possible, yielding a properly colored forest with bundles. The algorithm is presented as Algorithm~\ref{algo:maxpf+}.

\begin{algorithm}[h]
\caption{Local search algorithm for properly colored forests with bundles.}\label{algo:maxpf+}
\DontPrintSemicolon

\KwIn{A graph $G=(V,E)$ with edge-coloring $c\colon E\to[k]$.}
\KwOut{A properly colored forest with bundles $F$ in $G$.}

\medskip

Let $F$ be any properly colored forest with bundles.\label{step_+:1}\;
\For{$e \in E\setminus F$\label{step_+:for1}}{
\begin{varwidth}{0.9\linewidth}
If $F +e$ is a properly colored forest with bundles, then $F \gets F +e$.  
\end{varwidth}\; 
}
\For{each $u,v\in V$ with $E_{uv}\setminus F\neq\emptyset$\label{step_+:for2}}{
Let $E^\circ_{uv} \coloneqq E_{uv}\setminus F$.\;
\If{there exists a path $u=z_0,z_1,\dots,z_q=v$ in $\supp(F)$}{
    \For{each $i\in[q]$}{
        Let $S\coloneqq E_{z_{i-1}z_i}\cap F$.\;
        Let $E^\bullet_{uv}\coloneqq\bigl\{e\in E^\circ_{uv}\colon c(e)\notin c\bigl(\delta_{F\setminus S}(u)\cup \delta_{F\setminus S}(v)\bigr)\bigr\}$.\;
        \If{$|E^\bullet_{uv}|>|S|$}{
            $F\gets (F\setminus S)\cup E^\bullet_{uv}$ and go to Step~\ref{step_+:for1}.\;
        }
    }
}
}
\Return{$F$}\;
\end{algorithm} 

\begin{thm}\label{thm:main}
Algorithm~\ref{algo:maxpf+} provides a $1/3$-approximation for the Maximum-size Properly Colored Forest with Bundles problem in polynomial time.
\end{thm}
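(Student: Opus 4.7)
Termination is immediate: every iteration of Algorithm~\ref{algo:maxpf+} that updates $F$ strictly increases $|F|$, since Step~\ref{step_+:for1} adds one edge and each swap in Step~\ref{step_+:for2} replaces a bundle $S$ with $E^{\bullet}_{uv}$ of strictly larger size. As $|F|\le|E|$ and each iteration scans polynomially many candidates, the algorithm runs in polynomial time.

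For the approximation guarantee, let $F^*$ be a maximum properly colored forest with bundles and aim for $|F^*|\le 3|F|$. The plan is to partition $F^*\setminus F$ according to why each edge fails to be addable to $F$. For $e^*\in F^*\setminus F$ with endpoints $u,v$: if $F\cap E_{uv}\ne\emptyset$ or $u,v$ lie in different trees of $\supp(F)$, then $\supp(F+e^*)$ is still a forest, so by greedy maximality (Step~\ref{step_+:for1}) there is a color conflict, that is, $c(e^*)\in c(\delta_F(u)\cup\delta_F(v))$; I collect these edges in a class $X$. Otherwise $F\cap E_{uv}=\emptyset$ and $u,v$ lie in the same tree of $\supp(F)$, so adding $e^*$ would create a cycle in $\supp(F)$; these edges form a class $Y$, which I further split into $Y^{\mathrm c}$ (also exhibiting a color conflict) and $Y^{\mathrm{nc}}$ (no color conflict).

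For $X\cup Y^{\mathrm c}$, I charge each such edge to a color-conflicting $f\in F$ at a shared endpoint. Proper coloring of $F^*$ ensures that at each endpoint of $f$ at most one $F^*$-edge has color $c(f)$; if $f\in F\cap F^*$ that unique edge is $f$ itself, so $f$ receives no charge. Hence only edges of $F\setminus F^*$ can be charged, and each such $f$ receives at most two charges (one per endpoint), giving $|X|+|Y^{\mathrm c}|\le 2|F\setminus F^*|$.

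For $Y^{\mathrm{nc}}$, I group the edges by endpoint pair $(u,v)$: along the $u$-$v$ path $z_0,\dots,z_q$ in $\supp(F)$, the swap condition of Step~\ref{step_+:for2} with $S=F\cap E_{z_{i-1}z_i}$ gives the per-pair bottleneck bound $|F^*\cap E_{uv}|\le \min_i|F\cap E_{z_{i-1}z_i}|$, since $F^*\cap E_{uv}\subseteq E^{\bullet}_{uv}$ by the no-conflict property. To aggregate over all $Y^{\mathrm{nc}}$-pairs, I apply strong basis exchange in the graphic matroid within each tree $C$ of $\supp(F)$: extending the restriction of $\supp(F^*)$ to $V(C)$ to a spanning tree of $C$ using edges of $C$ yields an injection $\pi$ from $Y$-pairs in $V(C)$ to edges of $\supp(F)\setminus\supp(F^*)$, with $\pi(u,v)$ on the $u$-$v$ path in $\supp(F)$. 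Summing the per-pair bounds through $\pi$ gives $|Y^{\mathrm{nc}}|\le \sum_{(z,z')\in\supp(F)\setminus\supp(F^*)}|F\cap E_{zz'}|\le |F\setminus F^*|$. Combining the two estimates yields $|F^*\setminus F|\le 3|F\setminus F^*|$ and hence $|F^*|\le |F\cap F^*|+3|F\setminus F^*|\le 3|F|$. The main technical obstacle is this aggregation step: turning the per-pair bottleneck inequalities into a global bound requires the matroid-exchange injection to avoid double-counting the capacity of shared $\supp(F)$-edges.
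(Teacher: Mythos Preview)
Your proof is correct and takes essentially the same approach as the paper: both bound the color-conflict contribution by $2|F\setminus F^*|$ via endpoint charging and the cycle contribution by $|F\setminus F^*|$ via a graphic-matroid exchange injection between $\supp(F^*)$ and $\supp(F)$ within each tree; you partition $F^*\setminus F$ directly into $X$, $Y^{\mathrm c}$, $Y^{\mathrm{nc}}$, whereas the paper first proves a stand-alone local-optimality claim and then decomposes the minimum removal sets $S_{uv}$ into color and cycle parts. One minor slip worth fixing: you write $F^*\cap E_{uv}\subseteq E^\bullet_{uv}$, but only $Y^{\mathrm{nc}}\cap E_{uv}\subseteq E^\bullet_{uv}$ holds in general (edges of $F^*\cap E_{uv}$ that lie in $Y^{\mathrm c}$ may fail the no-conflict test); this is harmless since your aggregation only uses the weaker bound $|Y^{\mathrm{nc}}\cap E_{uv}|\le\min_i|F\cap E_{z_{i-1}z_i}|$ and sums to $|Y^{\mathrm{nc}}|$.
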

\begin{proof}
    Let $F$ be the solution returned by Algorithm~\ref{algo:maxpf+} and let $O$ be a maximum-size properly colored forest with bundles. By the construction of Algorithm~\ref{algo:maxpf+}, $F$ is a feasible solution. For each unordered pair $u,v\in V$, we write $F_{uv}\coloneqq F\cap E_{uv}$ and $O_{uv}\coloneqq O\cap E_{uv}$. First, we show that the solution returned by the algorithm is locally optimal with respect to replacing any subset of edges by a bundle between two vertices.
    
    \begin{cl}\label{cl:change}
        For each pair $u,v\in V$ and every $E^\bullet_{uv} \subseteq E^\circ_{uv}\subseteq E\setminus F$,
        \begin{equation}\label{eq:local-opt}
        \min\Bigl\{|S| \colon \ (F\setminus S)\cup E^\bullet_{uv} \text{ is a properly colored forest with bundles}\Bigr\}\ \ge\ |E^\bullet_{uv}|.
        \end{equation}
    \end{cl}
    \begin{proof}
        Fix a pair $u,v\in V$ and let $E^\bullet_{uv}\subseteq E^\circ_{uv}$. If $E^\bullet_{uv}=\emptyset$, then \eqref{eq:local-opt} clearly holds, so assume $E^\bullet_{uv}\neq\emptyset$. It suffices to show that for every set $S\subseteq F$ for which $(F\setminus S)\cup E^\bullet_{uv}$ is a properly colored forest with bundles, we have $|S|\ge |E^\bullet_{uv}|$, since taking the minimum over all such sets $S$ then yields \eqref{eq:local-opt}. We distinguish two cases according to the structure of $\supp(F)$.
        
        \medskip
        \noindent\textbf{Case~1.} There is no $u$-$v$ path in $\supp(F)$ of length at least two.
        \smallskip
        
        In this case, either $u$ and $v$ lie in different connected components of $\supp(F)$, or $uv$ is the unique edge on the $u$-$v$ path in $\supp(F)$. In either situation, adding any edge $e\in E_{uv}$ to $F$ cannot create a cycle in $\supp(F)$.
        
        By Step~\ref{step_+:for1} of Algorithm~\ref{algo:maxpf+}, for every edge $e\in E\setminus F$ the graph $F+e$ is not a properly colored forest with bundles. Hence, for each $e\in E^\bullet_{uv}$ there exists an edge $f_e\in F$ incident with $u$ or $v$ such that $c(f_e)=c(e)$. Since $(F\setminus S)\cup E^\bullet_{uv}$ is properly colored and every edge in $E^\bullet_{uv}$ is incident with both $u$ and $v$, the colors of the edges in $E^\bullet_{uv}$ are pairwise distinct. As $F$ is also properly colored, for each color there is at most one edge of that color incident with $u$ and at most one incident with $v$. It follows that the edges $f_e$ are all distinct, and the map $e \mapsto f_e$ is injective from $E^\bullet_{uv}$ into $F$. Moreover, if some $f_e\notin S$, then $f_e$ and $e$ would both be present in $(F\setminus S)\cup E^\bullet_{uv}$, sharing an endpoint and a color, contradicting proper coloring. Thus $f_e\in S$ for all $e\in E^\bullet_{uv}$, and hence
        \[
        |S| \ge |\{f_e \colon e\in E^\bullet_{uv}\}|  = |E^\bullet_{uv}|.
        \]
        
        \medskip
        \noindent\textbf{Case~2.} There is a $u$-$v$ path of length at least two in $\supp(F)$.
        \small
        
        Since $\supp(F)$ is a forest, there is a unique simple path between $u$ and $v$ in $\supp(F)$; write it as $u=z_0,z_1,\dots,z_q=v$, where $q\ge 2$. For each $i\in[q]$, denote by $S_i \coloneqq F_{z_{i-1}z_i}$ the bundle of edges in $F$ between $z_{i-1}$ and $z_i$.
        
        Because $(F\setminus S)\cup E^\bullet_{uv}$ is a forest with bundles and $E^\bullet_{uv}\neq\emptyset$, its support graph $\supp((F\setminus S)\cup E^\bullet_{uv})$ contains an edge between $u$ and $v$. If $S$ did not contain all edges of at least one bundle $S_i$, then for every $i$ there would remain at least one edge of $F$ between $z_{i-1}$ and $z_i$ after deleting $S$, and thus the entire path $u=z_0,z_1,\dots,z_q=v$ would still be present in $\supp(F\setminus S)$. Adding any edge between $u$ and $v$ would then create a cycle of length at least three in the support graph, contradicting that $(F\setminus S)\cup E^\bullet_{uv}$ is a forest with bundles. Hence there exists an index $i\in[q]$ such that $S_i \subseteq S$. Fix such an index $i$, and write $S' \coloneqq S\setminus S_i$, so that $|S| = |S_i| + |S'|$. For this fixed $i$, Algorithm~\ref{algo:maxpf+} considers the set
        \[
        E^{\bullet,i}_{uv} \coloneqq \bigl\{e\in E_{uv}\setminus F \colon c(e)\notin c(\delta_{F\setminus S_i}(\{u,v\}))\bigr\}.
        \]
        By Step~\ref{step_+:for2}, we have
        \begin{equation}\label{eq:algo-ineq}
        |E^{\bullet,i}_{uv}| \le |S_i|,
        \end{equation}
        for otherwise the algorithm would have performed the exchange
        \(
        F \gets (F\setminus S_i)\cup E^{\bullet,i}_{uv}
        \). Now define
        \[
        T \coloneqq E^\bullet_{uv} \cap E^{\bullet,i}_{uv}.
        \]
        Since $T\subseteq E^{\bullet,i}_{uv}$, we obtain from \eqref{eq:algo-ineq} that
        \begin{equation}\label{eq:T-bound}
        |T| \le |E^{\bullet,i}_{uv}| \le |S_i|.
        \end{equation}
        
        Consider any edge $e\in E^\bullet_{uv}\setminus T$. By definition of $T$ and $E^{\bullet,i}_{uv}$, we have $c(e)\in c\bigl(\delta_{F\setminus S_i}(\{u,v\})\bigr),$
        so there exists an edge $f_e\in \delta_{F\setminus S_i}(u)\cup \delta_{F\setminus S_i}(v)$
        with $c(f_e)=c(e)$. Since $f_e\notin S_i$ and $(F\setminus S)\cup E^\bullet_{uv}$ is properly colored, such an edge $f_e$ must be removed by $S'$, that is, $f_e\in S'$.
        
        As before,  the colors $c(e)$ for $e\in E^\bullet_{uv}\setminus T$ are all distinct. Since $F$ is properly colored, for each such color there is at most one edge of that color incident with $u$ or $v$ in $F\setminus S_i$. Hence the edges $f_e$ are all distinct and the map $e \longmapsto f_e$
        is injective from $E^\bullet_{uv}\setminus T$ into $S'$. Therefore,
        \begin{equation}\label{eq:Sprime-bound}
        |S'|\ \ge\ |E^\bullet_{uv}\setminus T|.
        \end{equation}
        Combining \eqref{eq:T-bound} and \eqref{eq:Sprime-bound}, we obtain
        \[
        |S| = |S_i| + |S'|
         \ge |T| + |E^\bullet_{uv}\setminus T|
         = |E^\bullet_{uv}|.
        \]
        
        \bigskip
        In both cases we have shown that every set $S\subseteq F$ for which $(F\setminus S)\cup E^\bullet_{uv}$ is a properly colored forest with bundles satisfies $|S|\ge |E^\bullet_{uv}|$, which implies \eqref{eq:local-opt}. This completes the proof.
    \end{proof}
    
    For $u,v\in V$, let $\Delta_{uv}\coloneqq O_{uv}\setminus F_{uv}$, that is, $\Delta_{uv}$ is the edges of $O$ from $E_{uv}$ that are not in $F$. By Claim~\ref{cl:change}, summing \eqref{eq:local-opt} over all pairs with $\Delta_{uv}\neq\emptyset$ yields
    \begin{equation}\label{eq:sum-local-opt}
    |O\setminus F|\ =\ \sum_{u,v\in V} |\Delta_{uv}| \le \sum_{u,v\in V} \sigma_{uv},
    \end{equation}
    where $\sigma_{uv}$ denotes the minimum size of set $S_{uv}\subseteq F$ such that  $(F\setminus S)\cup\Delta_{uv}$ is feasible. For each pair $u,v\in V$, let $S_{uv}$ be such a set. 
    We set 
    \[
    S^{\mathrm{col}}_{uv}\coloneqq\{f\in S\colon f\in\delta_F(u)\cup\delta_F(v)\ \text{and there exists $e\in\Delta_{uv}$ with}\ c(e)=c(f)\}
    \]
    and
    \[
    S^{\mathrm{cyc}}_{uv}\coloneqq\{f\in S\colon f\ \text{is in a bundle on the unique $u$-$v$ path in $\supp(F)$}\}.
    \]
    Note that the sets $S^{\mathrm{col}}_{uv}$ and $S^{\mathrm{cyc}}_{uv}$ are not necessarily disjoint, and by the minimality of $S_{uv}$ we have $S_{uv}=S^{\mathrm{col}}_{uv}\cup S^{\mathrm{cyc}}_{uv}$.

    For each edge $e\in S^{\mathrm{col}}_{uv}$, charge $e$ to the endpoint $x$ at which the color conflict occurs, namely the vertex $x$ such that $\Delta_{uv}$ contains an edge of color $\gamma$ incident to $x$ and $F$ contains an edge of the same color incident to $x$. For a fixed vertex $x$ and a fixed color $\gamma$, this can happen for at most one pair that charges $x$, because $O$ is properly colored. Thus, each vertex is charged at most once. Consequently, every edge in $F\setminus O$ is charged at most twice, whereas edges in $F\cap O$ are never charged. Therefore,
    \begin{equation}\label{eq:color-total}
    \sum_{u,v\in V} |S^{\mathrm{col}}_{uv}| \le 2|F\setminus O|.
    \end{equation}

    Adding $\Delta_{uv}$ can create a cycle only when $u$ and $v$ lie in the same connected component of $\supp(F)$. In that case, removing one edge on the unique $u$-$v$ path in $\supp(F)$ restores acyclicity for all edges in $\Delta_{uv}$. Fix a pair with $uv\in \supp(O)$, and let $T$ be the vertex set of the connected component of $\supp(F)$ containing $u$ and $v$. By the exchange property of matroids, the edges of $\supp(O[T])$ can be matched injectively to edges of $\supp(F[T])$ in such a way that each edge $xy$ in $\supp(O[T])$ is assigned an edge on the unique $x$-$y$ path in $\supp(F[T])$. We then include in $S^{\mathrm{cyc}}_{uv}$ all edges of $F\setminus O$ in the bundle corresponding to $\phi_T(uv)$ on the $u$-$v$ path in $\supp(F[T])$.
    Since $\phi_T$ is injective, we get
    \begin{equation}\label{eq:cycle-total}
    \sum_{u,v\in V} |S^{\mathrm{cyc}}_{uv}| \le |F\setminus O|.
    \end{equation}
    
    For every pair $u,v\in V$, the $(F\setminus S_{uv})\cup\Delta_{uv}$ is feasible. Hence $\sigma_{uv}\le |S_{uv}| \le |S^{\mathrm{col}}_{uv}|+|S^{\mathrm{cyc}}_{uv}|$. Summing over all pairs and using \eqref{eq:sum-local-opt}, \eqref{eq:color-total}, and \eqref{eq:cycle-total}, we obtain
    \[
    |O\setminus F| \le \sum_{u,v\in V} \sigma_{uv} \le \sum_{u,v\in V} |S^{\mathrm{col}}_{uv}| + \sum_{u,v\in V} |S^{\mathrm{cyc}}_{uv}| \le 2|F\setminus O| + |F\setminus O| = 3|F\setminus O|.
    \]
    This yields
    \[
    |O|=|O\cap F|+|O\setminus F| \le |O\cap F| + 3|F\setminus O| =\ 3|F| - 2|F\cap O| \le 3|F|,
    \]
    proving the approximation factor.

    Regarding time complexity, Steps~\ref{step_+:1}--\ref{step_+:for1} only require, for each edge $e\in E$, to test whether $F+e$ is a properly colored forest with bundles. This can clearly be done in polynomial time. The only nontrivial part is Step~\ref{step_+:for2}. Fix a pair $u,v\in V$ with $E_{uv}^\circ E_{uv}\setminus F\neq\emptyset$. Since $F$ is a forest with bundles, the support graph $\supp(F)$ is a forest, so there is a unique simple $u$-$v$ path in $\supp(F)$, if any. The algorithm either finds this path or determines that it does not exist, and in the former case it iterates over the bundles on this path. For each such bundle $S$, it constructs the corresponding candidate set $E_{uv}^\bullet$ by scanning the edges incident with $u$ or $v$ in $F\setminus S$ and the edges in $E_{uv}^\circ$, and checks whether $|E_{uv}^\bullet|>|S|$. All these operations are straightforward to implement in polynomial time. Therefore, for a fixed $F$, a full execution of Step~\ref{step_+:for2} also takes polynomial time. It remains to bound the number of times the algorithm restarts from Step~\ref{step_+:for1}. Whenever this happens, Step~\ref{step_+:for2} has found a local improvement with some $S$ and $E_{uv}^\bullet$ satisfying $|E_{uv}^\bullet|>|S|$, and we update $F \gets (F\setminus S)\cup E_{uv}^\bullet$. Hence the size $|F|$ strictly increases at each such restart. Since $F\subseteq E$, we always have $|F|\le |E|$, so the algorithm can return to Step~\ref{step_+:for1} at most $|E|$ times. Combining this with the fact that each iteration takes polynomial time, we conclude that Algorithm~\ref{algo:maxpf+} runs in polynomial time.
\end{proof}

When the number of colors is small, the approximation factor can be further improved.

\begin{thm}\label{thm:SmallColors}
    For the Maximum-size Properly Colored Forest with Bundles problem on $k$-edge-colored graphs,
    there exists a polynomial-time algorithm that achieves a $3/4$-approximation when $k=2$ and a $1/2$-approximation when $k=3$.
\end{thm}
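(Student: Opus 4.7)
The key structural observation driving both statements is that in any properly colored subgraph each color class must be a matching, since two edges of the same color sharing a vertex would form an adjacent monochromatic pair and violate proper coloring. Writing $\mu_i$ for the size of a maximum matching in the spanning subgraph $(V,E_i)$ of color-$i$ edges, this yields the global upper bound $|O|\le\sum_{i\in[k]}\mu_i$ for any feasible solution $O$. Each $\mu_i$ is computable in strongly polynomial time by the classical blossom algorithm, so this baseline is algorithmically accessible.

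For $k=2$ the plan is to compute maximum matchings $M_1\subseteq E_1$ and $M_2\subseteq E_2$ independently and set $F\coloneqq M_1\cup M_2$, so that $|F|=\mu_1+\mu_2\ge|O|$. Since every vertex is incident to at most one edge of each color, $F$ has maximum multidegree $2$, and $\supp(F)$ decomposes into vertex-disjoint paths and cycles. I would then argue that every cycle in $\supp(F)$ has length at least $4$: such a cycle must alternate colors, forcing even length, and the only way to form a length-two multigraph cycle is to place two parallel edges of distinct colors between a pair of vertices, which is a bundle of size two, i.e., a single edge in the support rather than a support cycle. Deleting one arbitrary edge from each support cycle therefore destroys all forbidden cycles while preserving proper coloring, yielding a properly colored forest with bundles $F'$ with $|F'|\ge\tfrac{3}{4}|F|\ge\tfrac{3}{4}|O|$.

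For $k=3$ I would reduce to the previous case by pigeonhole. Among the three values $\mu_1,\mu_2,\mu_3$ there is a pair $\mu_i,\mu_j$ with $\mu_i+\mu_j\ge\tfrac{2}{3}(\mu_1+\mu_2+\mu_3)\ge\tfrac{2}{3}|O|$; I would select such a pair, restrict attention to the $2$-edge-colored subgraph spanned by colors $i$ and $j$, and run the $k=2$ algorithm there. Since any properly colored forest with bundles in this restricted subgraph is also a properly colored forest with bundles in $G$, the output $F'$ satisfies $|F'|\ge\tfrac{3}{4}(\mu_i+\mu_j)\ge\tfrac{3}{4}\cdot\tfrac{2}{3}\cdot|O|=\tfrac{1}{2}|O|$.

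I do not expect a serious obstacle: the only delicate step is the length-$\ge 4$ claim for support cycles in the $k=2$ case, and this follows directly from the alternating-color argument combined with the convention that multigraph $2$-cycles are bundles rather than forbidden support cycles. The matching upper bound $|O|\le\sum_i\mu_i$ and the pigeonhole reduction from $k=3$ to $k=2$ are both elementary, so the bulk of the write-up is verifying these bookkeeping steps and the polynomial running time, which is inherited from the matching oracle.
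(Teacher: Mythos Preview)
Your argument is correct. For $k=2$ you do exactly what the paper does---take the union of two maximum matchings and break support cycles---and your justification that every support cycle has even length $\ge 4$ is sound: lifting a cycle of $\supp(F)$ to $F$ by choosing one parallel edge per support edge gives a cycle in $F$, which must alternate colors since $F$ is properly $2$-colored, and $\supp(F)$ is simple so length $2$ is excluded.

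For $k=3$ you take a genuinely different route. The paper keeps all three matchings, sets $M=M_1\cup M_2\cup M_3$, returns a maximum forest with bundles inside $M$, and defers the $\tfrac12$-analysis to~\cite{bai2026approximating}. You instead discard the color with the smallest $\mu_i$ and reduce to the $k=2$ case via pigeonhole. Your proof is fully self-contained and needs nothing beyond the $k=2$ analysis, which is a nice feature. Conversely, the paper's algorithm is potentially stronger in practice since it may use edges of all three colors; but note that your output is itself a forest with bundles contained in $M$, so your bound $|F'|\ge\tfrac12|O|$ also certifies the paper's algorithm---your analysis can be read as an alternative proof of their guarantee.
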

\begin{proof}
    We use the following simple algorithm: for each color $i\in[k]$, compute a maximum matching $M_i\subseteq E_i$ and take their union $M\coloneqq \bigcup_{i=1}^k M_i$. Then  return a maximum forest with bundles in $M$. Adapting the analysis in~\cite{bai2026approximating}, this yields a $3/4$-approximation for 2-edge-colored graphs and a $1/2$-approximation for 3-edge-colored graphs.
\end{proof}

\subsection{Branchings}
\label{sec:arborescences}

It is natural to ask for a directed analogue of \maxpf. In the directed setting, we consider two types of color constraints, depending on whether the bounds apply to all incident arcs or only to outgoing arcs; note that, since the in-degree of every vertex is at most one in a branching, bounds on incoming arcs are not meaningful. In the \emph{$g$-properly colored} setting, for every color $i$ and vertex $v$, at most $g_i(v)$ selected arcs of color $i$ may be incident to $v$. In the \emph{$g$-properly out-colored} setting, the bounds apply only to outgoing arcs, so for all $v$ and $i$ the number of selected arcs of color $i$ leaving $v$ is at most $g_i(v)$.

Applying the iterative refinement algorithm of Linhares, Olver, Swamy, and Zenklusen~\cite{linhares2020iterative} for three-matroid intersection yields a $1/2$-approximation for the Maximum-weight $g$-Properly Out-colored Branchings problem.

\begin{thm}\label{prop:outcoloredbranchings}
There is a polynomial-time LP-relative $1/2$-approximation algorithm for the Maximum-weight $g$-Properly Out-colored Branchings problem.
\end{thm}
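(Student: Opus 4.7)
The plan is to recast the problem as a three-matroid intersection and then invoke the iterative refinement framework of Linhares, Olver, Swamy, and Zenklusen~\cite{linhares2020iterative} as a black box. Given the input digraph $D=(V,A)$ with arc weights $w\colon A\to\bR_{\ge 0}$ and coloring $c\colon A\to[k]$, I would introduce three matroids on the ground set $A$: the graphic matroid $\cM_1$ of the underlying undirected graph of $D$; the head partition matroid $\cM_2$ with parts $\{\delta^{in}_A(v)\colon v\in V\}$, each of rank one; and the tail--color partition matroid $\cM_3$ with parts $\{a\in\delta^{out}_A(v)\colon c(a)=i\}$ of capacity $g_i(v)$ for every $v\in V$ and $i\in[k]$. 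A quick check shows that an arc set with in-degree at most one at every vertex is directed-acyclic if and only if its underlying undirected edge set is a forest. Consequently, the $g$-properly out-colored branchings of $D$ are precisely the common independent sets of $\cM_1$, $\cM_2$, and $\cM_3$.

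Next, I would consider the natural LP relaxation
\[
\max\bigl\{w^\top x\colon x\in P(\cM_1)\cap P(\cM_2)\cap P(\cM_3)\bigr\},
\]
where $P(\cM_j)$ denotes the matroid polytope of $\cM_j$. Since $\cM_1\cap\cM_2$ is the classical branching constraint, already solvable in polynomial time as a two-matroid intersection, and $\cM_3$ is a single additional matroid, the problem fits the regime handled by~\cite{linhares2020iterative}. Their iterative refinement scheme yields, in polynomial time, an integral common independent set $F$ with $w(F)\ge\tfrac12\,w^\top x^\ast$ for any optimum $x^\ast$, which establishes the claimed LP-relative $1/2$-approximation since the LP upper bounds the integer optimum.

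The main obstacle will be verifying that the hypotheses of~\cite{linhares2020iterative} are satisfied by our specific trio of matroids, and that the necessary primitives (separation over $P(\cM_1)\cap P(\cM_2)$, extreme-point computation, and contraction or deletion in each $\cM_j$) are implementable via standard matroid-oracle calls. Their scheme maintains a current LP, computes an extreme optimal solution $x^\ast$, and uses a token-counting argument to show that at every such extreme point either some coordinate of $x^\ast$ is integral (so the corresponding element can be fixed and the ground set reduced) or some tight constraint of $\cM_3$ can be dropped with a loss bounded by a factor of two over the remainder of the procedure. The partition-matroid structure of $\cM_3$ makes this counting transparent, because its rank inequalities decompose into disjoint groups indexed by vertex--color pairs. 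Once these structural checks are in place, iterating the refinement a polynomial number of times yields an integral solution with the stated guarantee.
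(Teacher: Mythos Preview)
Your proposal is correct and follows essentially the same approach as the paper: both reduce the problem to three-matroid intersection using the graphic matroid, the in-degree partition matroid, and the out-color partition matroid (your $\cM_3$ is exactly the paper's $\cM_g=\bigoplus_{v}\cM_v$), and then apply the iterative refinement algorithm of~\cite{linhares2020iterative} as a black box. Your extra discussion of the LP and the token-counting internals is not needed for the argument, but it does no harm.
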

\begin{proof}
Let $G=(V,E)$ be a directed graph, $w\colon E\to\bR$ be a weight function, and $g\colon V\to\bZ_{\geq 0}$ be upper bounds. We reduce the problem to an instance of three-matroid intersection on the arc set $E$. Let $\cM_G$ be the graphic matroid of the underlying undirected graph, and let $\cM_{\mathrm{in}}$ be the partition matroid whose parts are the sets $\delta^{in}(v)$ for $v\in V$, each with upper bound $1$. The intersection $\cM_G \cap \cM_{\mathrm{in}}$ therefore describes all branchings. To enforce the color bounds, for each vertex $v$ introduce a partition matroid $\cM_{v}$ on $\delta^{out}_E(v)$ whose parts are the sets $\delta^{out}_{E_i}(v)$ and whose upper bounds are $g_i(v)$, where recall that $E_i$ is the set of arcs of color $i$. Since every arc has a unique tail, these matroids are defined on disjoint subsets of $E$ and together form the direct sum $\cM_g = \bigoplus_{v\in V} \cM_{v}$. The independent sets of the intersection $\cM_G \cap \cM_{\mathrm{in}} \cap \cM_g$ are thus exactly the $g$-properly out-colored branchings.

Using the algorithm of~\cite{linhares2020iterative} for the intersection of three matroids, applied to the matroids $\cM_G$, $\cM_{\mathrm{in}}$, and $\cM_g$, yields an LP-relative $1/2$-approximation for finding a maximum-weight $g$-properly out-colored branching.
\end{proof}

The Maximum-weight $g$-Properly Colored Branchings problem, however, reduces to \maxwdbmis.

\begin{thm}\label{thm:coloredbranchings}
    The Maximum-size $g$-Properly Colored Branchings problem admits a polynomial-time $1/2$-approximation. Moreover, the Maximum-weight $g$-Properly Colored Branchings problem admits a $(\ln 4)/5$-approximation.
\end{thm}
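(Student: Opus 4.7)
The plan is to reduce both the unweighted and weighted variants of the $g$-Properly Colored Branchings problem to \maxwdbmis, and then invoke Theorem~\ref{thm:deltaMinusOne} with $\Delta=3$.

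Given a directed graph $G=(V,E)$ with arc-coloring $c\colon E\to[k]$, bounds $g_i\colon V\to\bZ_{\ge 0}$, and (in the weighted case) weights $w\colon E\to\bR_{\ge 0}$, I take the ground set of the \maxwdbmis instance to be $E$ and the matroid $\cM$ to be the graphic matroid of the underlying undirected graph of $G$; its independent sets are precisely those arc sets whose undirected supports are forests. I then build a hypergraph $\cH$ on $E$ with two families of hyperedges: for each $v\in V$, the set $\delta^{in}(v)$ with upper bound $1$, enforcing in-degree at most one at $v$; and for each $v\in V$ and $i\in[k]$, the set $\delta_{E_i}(v)$ of color-$i$ arcs incident to $v$ with upper bound $g_i(v)$, encoding the $g$-proper coloring condition. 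Together, the matroid and hyperedge constraints single out exactly the $g$-properly colored branchings of $G$, and the objective is preserved under the reduction.

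The crucial observation is a degree count for $\cH$: each arc $e=(u,v)$ of color $i$ lies in exactly three hyperedges, namely $\delta^{in}(v)$ from the first family and $\delta_{E_i}(u),\delta_{E_i}(v)$ from the second. Hence $\Delta=3$. Feeding this value into Theorem~\ref{thm:deltaMinusOne} yields a polynomial-time $(2/(\Delta+1)-\varepsilon)=(1/2-\varepsilon)$-approximation in the unit-weight case, which gives the claimed $1/2$-approximation up to an arbitrarily small $\varepsilon$, and a $(\ln 4)/(\Delta+2)=(\ln 4)/5$-approximation in the weighted case, exactly as stated.

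I do not foresee a substantial obstacle. The construction is a purely syntactic reduction, and the equivalence between feasible solutions to the two problems is immediate from the definitions. The only care needed is to verify that the algorithmic ingredients of Theorem~\ref{thm:deltaMinusOne} run in polynomial time in this setting, which is routine: the graphic matroid of $G$ admits an efficient independence oracle, and the uniform-matroid constraints coming from the hyperedges of $\cH$ are trivial to evaluate.
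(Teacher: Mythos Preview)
Your proposal is correct and follows exactly the same route as the paper: the paper's own proof merely asserts that the problem is a special case of \maxwdbmis with $\Delta=3$ and invokes Theorem~\ref{thm:deltaMinusOne}, while you spell out the reduction explicitly (graphic matroid on the arc set, in-degree hyperedges, and color hyperedges). You are also right to flag that Theorem~\ref{thm:deltaMinusOne} strictly yields $(1/2-\varepsilon)$ rather than an exact $1/2$ in the unweighted case; the paper glosses over this $\varepsilon$.
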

\begin{proof}
    It is not difficult to see that the Maximum-weight $g$-Properly Colored Branchings problem is a special case of \maxwdbmis with $\Delta = 3$. Then, by Theorem~\ref{thm:deltaMinusOne}, the statement follows.
\end{proof}

\begin{rem}
The \emph{Rainbow Arborescence Conjecture}, due to Yokoi~\cite[Open Problem: Rainbow Arborescence Problem]{de2019combinatorial}, asserts that any digraph on $n$ vertices that is the union of $n-1$ spanning arborescences contains an arborescence using exactly one arc from each. Although several partial results are known~\cite{berczi2024rainbow,berczi2025rainbow}, the conjecture remains open in full generality. As a relaxation, one may instead ask for the existence of a properly out-colored or properly colored arborescence in the same setting; we refer to the latter as the \emph{Properly Colored Arborescence Conjecture}.
\end{rem}

\subsection{\texorpdfstring{$b$}{b}-matchings}
\label{sec:bmatchings}

Straying further from forests we consider maximum weight $g$-properly colored $b$-matchings. This problem can be addressed using a strongly polynomial algorithm for \emph{simultaneous assignment on weighted hierarchical $b$-matchings} due to Madarasi~\cite[Corollary 3]{madarasi2021}. In that problem, one is given a graph $G' = (V', E')$ and weights $w'\colon V \rightarrow \bR_+ $ together with a laminar family $\mathcal L$ of vertex subsets. For each $L\in\mathcal L$, an upper bound $g'\colon \mathcal{L} \rightarrow \bZ_{\geq 0}$  is imposed on the total degree of the vertices in $L$. The objective is to find a set of edges of maximum total weight that satisfies all these degree constraints.

\begin{thm}\label{thm:b-matchings} 
A maximum-weight $g$-properly colored $b$-matching can be found in strongly polynomial time.
\end{thm}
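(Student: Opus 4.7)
My plan is to reduce the Maximum-weight $g$-Properly Colored $b$-Matching problem, on an edge-colored graph $G=(V,E)$ with coloring $c\colon E\to[k]$, edge weights $w$, degree bounds $b\colon V\to\bZ_{\ge 0}$ and color-degree bounds $g_i\colon V\to\bZ_{\ge 0}$, to an instance of the weighted hierarchical $b$-matching problem, and then invoke Madarasi's strongly polynomial algorithm~\cite{madarasi2021} as a black box. The guiding idea is to split each vertex into one copy per color: then the color-degree constraint at $v$ for color $i$ becomes the ordinary degree bound of a single copy $v_i$, while the original $b$-matching constraint at $v$ becomes a joint bound on the degrees of all copies of $v$, which is exactly what laminarity with one block per vertex expresses.

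Concretely, I would construct an auxiliary graph $G'=(V',E')$ with $V'\coloneqq\{v_i\colon v\in V,\ i\in[k]\}$ and, for every edge $e=uv\in E$ with $c(e)=i$, an edge $e'\coloneqq u_iv_i\in E'$ of weight $w'(e')\coloneqq w(e)$. The assignment $e\mapsto e'$ is injective on $E$, because by the paper's global assumption each color class $E_i$ contains no parallel edges. The laminar family $\mathcal L$ on $V'$ consists, for every $v\in V$ and every $i\in[k]$, of the singleton $\{v_i\}$ with upper bound $g'(\{v_i\})\coloneqq g_i(v)$, and, for every $v\in V$, of the block $L_v\coloneqq\{v_1,\dots,v_k\}$ with upper bound $g'(L_v)\coloneqq b(v)$. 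Laminarity is immediate: each singleton is nested inside its host block $L_v$, and the blocks $L_v$ are pairwise disjoint over $v\in V$.

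It then remains to verify the correspondence between feasible solutions. For $M\subseteq E$, set $M'\coloneqq\{u_iv_i\colon uv\in M,\ c(uv)=i\}$. Then $|\delta_{M'}(v_i)|=|\delta_{M\cap E_i}(v)|$, so the singleton bound at $\{v_i\}$ exactly encodes the $g$-properly colored constraint, while $\sum_{i=1}^{k}|\delta_{M'}(v_i)|=|\delta_M(v)|$, so the block bound at $L_v$ exactly encodes the $b$-matching constraint. Since weights are transferred edge-by-edge, the correspondence is value-preserving in both directions, so an optimal hierarchical $b$-matching in $(G',w',\mathcal L,g')$ pulls back along $E'\to E$ to an optimal $g$-properly colored $b$-matching in $G$, in strongly polynomial time inherited from~\cite{madarasi2021}. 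The one nontrivial conceptual step is recognizing that after the vertex-splitting the color and $b$-matching constraints happen to fit into a depth-two laminar family; beyond this observation I do not foresee any serious obstacle, as both the construction and the back-translation are purely syntactic.
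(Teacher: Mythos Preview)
Your proposal is correct and follows the same overall strategy as the paper: reduce to Madarasi's weighted hierarchical $b$-matching via a vertex-splitting construction. The only difference is in the granularity of the split. The paper creates one copy $v^e$ of $v$ per \emph{incident edge} $e$, so $G'$ is a disjoint union of single edges and the laminar sets $L_{v,i}=\{v^e:e\in\delta_{E_i}(v)\}$ collect the copies by color; you instead create one copy $v_i$ per \emph{color}, so the color constraint becomes the ordinary degree bound at $v_i$ and the $b$-constraint is the block $L_v=\{v_1,\dots,v_k\}$. Both yield a depth-two laminar family and a value-preserving bijection with feasible solutions. Your version is arguably tidier in that it keeps $|V'|=k|V|$ rather than $2|E|$ and, more importantly, it explicitly encodes the $b$-matching bound $|\delta_M(v)|\le b(v)$ via the block $L_v$, something the paper's written proof actually omits (it only lists the color sets $L_{v,i}$). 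So your argument is at least as complete as the paper's.
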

\begin{proof}
   Let $(G = (V, E), w, g)$ be a maximum-weight $g$-properly colored $b$-matching instance. We construct simultaneous assignment on weighted hierarchical $b$-matchings instance $(G' = (V', E'),w', \mathcal{L}, g')$ as follows. 

    For each vertex $v\in V$ and for each edge $e\in\delta(v)$ incident to $v$, create a copy $v^e$ and include it in $V'$. For every edge $e=uv\in E$, add to $E'$ the edge $u^e v^e$, so $G'$ contains one edge corresponding to each original edge; we define the weight of this edge to be $w'(u^ev^e)=w(uv)$. For every vertex $v\in V$ and color $i\in[k]$, define the vertex set $L_{v,i} \coloneqq \{v^e \in V' \colon e \in \delta_{E_i}(v)\}$, where $E_i$ is the set of edges of color $i$ in $G$. The color constraint $|\delta_{F_i}(v)| \le g_i(v)$ for a selected edge set $F$ is equivalent to requiring that the degree contributed by chosen edges to the vertices in $L_{v,i}$ is at most $g_i(v)$.
    
    For a fixed vertex $v$, the family $\{ L_{v,i} \colon i\in[k] \}$ forms a laminar family, since the sets partition the incident edges by color. Hence all degree constraints can be expressed through a laminar family of vertex sets, and the problem on $G'$ becomes an instance of the weighted hierarchical $b$-matching problem, and thus our reduction is complete. By \cite[Corollary~3]{madarasi2021}, this problem is solvable in strongly polynomial time.
    
    Finally, observe that $|V'|=\sum_{v\in V} |\delta_G(v)|$ and $|E'|=|E|$, so the size of $G'$ is linear in $|E|$ and therefore polynomial in the input size. This yields a strongly polynomial algorithm for our original problem. 
\end{proof}

\section{Conclusions}
\label{sec:conc}

In this paper, we introduced and studied the Maximum-size Degree Bounded Matroid Independent Set problem, which in turn led to improved approximation guarantees for the Maximum-size Properly Colored Forest problem. We also extended our framework to other properly colored structures, including properly colored forests with bundles, branchings, and $b$-matchings.

We close the paper by mentioning some open problems:
\begin{enumerate}\itemsep0em
    \item For \maxwdbmis, the ratio $\ln 4/(\Delta+2)$ improves on the bound $1/\Delta$ for large $\Delta$, but is weaker for all $\Delta \le 5$. Establishing a polynomial-time $1/\Delta$-approximation for $\Delta \le 5$ remains open. A broader goal is to break the $1/\Delta$ barrier altogether -- most notably, to obtain an approximation ratio exceeding $1/2$ for \maxwpf.
    \item For strongly base orderable matroids, \maxwdbmis is a $(\Delta+1)$-exchange system and admits a $2/(\Delta+2)$-approximation. Determining additional matroid classes that admit an approximation ratio better than $(\ln 4)/(\Delta+2)$ remains open.
    \item The Properly Colored Arborescence Conjecture is a relaxation of the Rainbow Arborescence Conjecture. Establishing the properly colored version remains open and would mean a significant progress toward the rainbow conjecture.
\end{enumerate}

\medskip
\paragraph{Acknowledgement.}
Yuhang Bai was supported by the National Natural Science Foundation of China (12131013 and 12471334), by China Scholarship Council (202406290002), and by Shaanxi Fundamental Science Research Project for Mathematics and Physics (22JSZ009). This research has been implemented with the support provided by the Lend\"ulet Programme of the Hungarian Academy of Sciences (LP2021-1/2021), by the Ministry of Innovation and Technology of Hungary from the National Research, Development and Innovation Fund (ADVANCED 150556 and ELTE TKP 2021-NKTA-62), and by Dynasnet European Research Council Synergy project (ERC-2018-SYG 810115).

\paragraph{Conflicts of interest} The authors declare that they have no known competing financial interests or personal relationships that could have appeared to influence the work reported in this paper.

\bibliographystyle{abbrv}
\bibliography{maxwpf}
\end{document}